\theoremstyle{plain}
\newtheorem{theorem}{Theorem}
\newtheorem{lemma}[theorem]{Lemma}
\theoremstyle{definition}
\newtheorem{definition}[theorem]{Definition}
\newtheorem{algo}[theorem]{Algorithm}
\journal{arXiv}
\begin{document}




\title{A recognition algorithm for adjusted interval digraphs}


\author[KU]{Asahi~Takaoka}\ead{takaoka@kanagawa-u.ac.jp}
\address[KU]{
  Department of Information Systems Creation, 
  Faculty of Engineering, 
  Kanagawa University, \\ 
  Rokkakubashi 3-27-1, Kanagawa-ku, 
  Yokohama-shi, 
  Kanagawa, 221--8686, Japan 
}

\begin{keyword}
Adjusted interval digraphs \sep 
Min ordering \sep 
Recognition algorithm
%
\MSC 
68R10,	
05C75	
\end{keyword}

\begin{abstract}
Min orderings give a vertex ordering characterization, 
common to some graphs and digraphs 
such as interval graphs, 
complements of threshold tolerance graphs 
(known as co-TT graphs), and 
two-directional orthogonal ray graphs. 
An adjusted interval digraph is 
a reflexive digraph that has a min ordering. 
Adjusted interval digraph can be recognized in $O(n^4)$ time, 
where $n$ is the number of vertices of the given graph. 
Finding a more efficient algorithm is posed 
as an open question. 
This note provides a new recognition algorithm 
with running time $O(n^3)$. 
The algorithm produces a min ordering 
if the given graph is an adjusted interval digraph. 

\end{abstract}


\maketitle


\section{Introduction}
All graphs and directed graphs (digraphs for short) 
considered in this paper are 
finite and have no multiple edges 
but may have loops. 
We write $uv$ for the undirected edge 
joining a vertex $u$ and a vertex $v$; 
we write $(u, v)$ for the directed edge 
from $u$ to $v$. 
We write $V(H)$ for the vertex set of a digraph $H$; 
we write $E(H)$ for the edge set of $H$. 
We say that $u$ \emph{dominates} $v$ 
(and that $v$ \emph{is dominated by} $u$) 
in a digraph $H$ if $(u, v) \in E(H)$, 
and denote it by $u \to v$ or $v \gets u$. 

A digraph $H$ is an \emph{interval digraph}~\cite{SDRW89-JGT} if 
for each vertex $v$ of $H$, 
there is a pair of intervals $I_v$ and $J_v$ 
on the real line such that 
$u \to v$ in $H$ 
if and only if $I_u$ intersects $J_v$. 
An interval digraph is 
an \emph{adjusted interval digraph}~\cite{FHHR12-DAM} 
if the two intervals $I_v$ and $J_v$ have the same left endpoint 
for each vertex $v$. 
A digraph is called \emph{reflexive} if every vertex has a loop, 
and every adjusted interval digraph is reflexive by definition. 

Adjusted interval digraphs have been introduced 
by Feder et al.~\cite{FHHR12-DAM} in connection with 
the study of list homomorphisms. 
They have shown two characterizations 
and a recognition algorithm of this graph class. 

A \emph{min ordering} of a digraph $H$ is a linear ordering $\prec$ 
of the vertices of $H$ such that 
for any two edges $(u, v)$ and $(u', v')$ of $H$, 
we have $(u, v') \in E(H)$ 
if $u \prec u'$ and $v' \prec v$. 
We remark that $(u, v)$ can be a loop, 
and similarly for $(u', v')$. 
A reflexive digraph has a min ordering 
if and only if it is an adjusted interval digraph~\cite{FHHR12-DAM}. 
Min orderings give similar characterizations of some graph classes 
such as interval graphs, 
complements of threshold tolerance graphs 
(known as co-TT graphs)~\cite{MRT88-JGT}, 
two-directional orthogonal ray graphs~\cite{STU10-DAM}, and 
signed-interval digraphs~\cite{HHMR18-LIPIcs}. 
See~\cite{HHMR18-LIPIcs} for details. 

Suppose that a digraph $H$ has a min ordering $\prec$, and 
let $(u, v), (u', v')$ be two edges of $H$ 
with $(u, v') \notin E(H)$. 
We have $v \neq v'$ from 
$(u, v) \in E(H)$ and $(u, v') \notin E(H)$; 
similarly, we have $u \neq u'$ from 
$(u', v') \in E(H)$ and $(u, v') \notin E(H)$. 
If $u \prec u'$ and $v' \prec v$, then 
$(u, v') \in E(H)$ by the property of min orderings, 
a contradiction. 
Thus, 
$u \prec u'$ implies $v \prec v'$ and 
$v' \prec v$ implies $u' \prec u$. 
We can capture 
this forcing relation 
with an auxiliary digraph. 
The \emph{pair digraph} $H^+$ 
associated with a digraph $H$ is a digraph 
such that the vertex set $V(H^+)$ is the set 
$\{(u, v) \colon\ u \neq v\}$ 
of ordered pair of two vertices of $H$, and 
$(u, u') \to (v, v')$ and 
$(v', v) \to (u', u)$ in $H^+$ 
if and only if 
$(u, v), (u', v') \in E(H)$ and $(u, v') \notin E(H)$. 

An \emph{invertible pair} of a digraph $H$ is 
a pair of two vertices $u, v$ of $H$ such that 
in $H^+$, the vertices $(u, v)$ and $(v, u)$ are 
in the same strong component. 
It is clear that if $H$ has an invertible pair, 
then $H$ does not have any min ordering. 
Feder et al.~\cite{FHHR12-DAM} have shown that 
the converse also holds; therefore, 
a reflexive digraph has no invertible pairs 
if and only if it has a min ordering. 

The characterizations of adjusted interval digraphs yield 
a recognition algorithm with running time $O(m^2 + n^2)$, 
where $n$ and $m$ are the number of vertices and edges 
of the given graph, respectively~\cite{FHHR12-DAM}. 
Finding a linear-time recognition algorithm is posed as 
an open question~\cite{FHHR12-DAM,HHMR18-LIPIcs}. 
In this paper, we show an $O(n^3)$-time recognition algorithm 
for adjusted interval digraphs. 
The algorithm produces a min ordering or 
finds an invertible pair of the given graph if it exists. 
As a byproduct, we also give an alternative proof to show that 
a reflexive digraph has a min ordering 
if and only if it has no invertible pairs.

\section{Algorithm}
In the case of reflexive digraphs, 
there is an equivalent simpler definition of min orderings. 
\begin{theorem}[Feder et al.~\cite{FHHR12-DAM}]
Let $H$ be a reflexive digraph. 
A linear ordering $\prec$ of the vertices of $H$ is a min ordering 
if and only if for any three vertices $u, v, w$ 
with $u \prec v \prec w$, 
\begin{enumerate}[\bfseries --]
\item $(u, w) \in E(H)$ implies $(u, v) \in E(H)$, and 
\item $(w, u) \in E(H)$ implies $(v, u) \in E(H)$. 
\end{enumerate}
\end{theorem}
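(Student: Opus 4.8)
The plan is to prove the two implications separately, in each case unwinding the definition of min ordering from the Introduction and using that $H$ is reflexive, so that $(x,x) \in E(H)$ for every vertex $x$. Recall that loops count as edges for the purposes of the min ordering condition, and this is exactly what the forward direction will exploit.

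For the forward direction, I would assume $\prec$ is a min ordering and fix vertices $u \prec v \prec w$. To obtain the first bullet, suppose $(u,w) \in E(H)$ and apply the min ordering property to the pair of edges $(u,w)$ and $(v,v)$: here the first endpoints satisfy $u \prec v$ and the second endpoints satisfy $v \prec w$, so the property yields $(u,v) \in E(H)$. For the second bullet, suppose $(w,u) \in E(H)$ and apply the min ordering property to the pair of edges $(v,v)$ and $(w,u)$: the first endpoints satisfy $v \prec w$ and the second endpoints satisfy $u \prec v$, so we get $(v,u) \in E(H)$. Each bullet is thus a one-line consequence, with the loop at $v$ supplying the second edge needed to invoke the definition.

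For the converse, I would assume $\prec$ satisfies the two bullet conditions and take arbitrary edges $(a,b),(a',b') \in E(H)$ with $a \prec a'$ and $b' \prec b$; the goal is $(a,b') \in E(H)$. The argument splits according to the position of $b'$ relative to $a$. If $a = b'$, then $(a,b')=(a,a) \in E(H)$ by reflexivity. If $a \prec b'$, then $a \prec b' \prec b$, and applying the first bullet to this triple together with the edge $(a,b)$ gives $(a,b') \in E(H)$. If $b' \prec a$, then $b' \prec a \prec a'$, and applying the second bullet to this triple together with the edge $(a',b')$ gives $(a,b') \in E(H)$. In all three cases the desired edge is produced.

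Overall the proof is short and presents no serious obstacle; the only point that needs care is the three-way case distinction in the converse. It is forced because the bullet conditions only constrain triples that are already ordered in a prescribed way, whereas the min ordering condition involves four vertices $a,a',b,b'$ subject only to the two partial-order constraints $a \prec a'$ and $b' \prec b$, so one must first fix where $a$ and $b'$ sit relative to each other before a bullet becomes applicable. Keeping the loop convention in mind throughout the forward direction is the other thing to watch.
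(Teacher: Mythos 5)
Your proof is correct. Note that the paper itself does not prove this statement---it is quoted from Feder et al.\ \cite{FHHR12-DAM} without proof---so there is no in-paper argument to compare against; your write-up simply supplies the standard direct verification. Both directions check out: in the forward direction the loop $(v,v)$, which the paper explicitly allows as one of the two edges in the min-ordering condition, is exactly the right second edge, and in the converse the three-way split on the position of $b'$ relative to $a$ (with reflexivity handling the case $a=b'$) covers all configurations of edges $(a,b),(a',b')$ with $a\prec a'$ and $b'\prec b$.
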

In other words, 
a linear ordering $\prec$ of the vertices 
of a reflexive digraph is a min ordering if 
it contains no triples of vertices $u, v, w$ with 
$u \prec v \prec w$ such that 
$(u, w) \in E(H)$ and $(u, v) \notin E(H)$, or
$(w, u) \in E(H)$ and $(v, u) \notin E(H)$. 
We call such triples of vertices the \emph{forbidden patterns}. 

Let $H$ be an adjusted interval digraph 
with a min ordering $\prec$. 
Let $u, v, w$ be distinct vertices of $H$ with 
$(u, w) \in E(H)$ and $(u, v) \notin E(H)$, or 
$(w, u) \in E(H)$ and $(v, u) \notin E(H)$. 
In both cases, if $u \prec v \prec w$ 
then we have a forbidden pattern. 
Thus, 
$u \prec v$ implies $w \prec v$ and $v \prec w$ implies $v \prec u$. 
To capture this forcing relation, 
we define an auxiliary digraph associated with $H$. 
\begin{definition}
Let $H$ be a reflexive digraph. 
The \emph{implication graph} $H^*$ of $H$ 
is a digraph such that 
the vertex set $V(H^*)$ is the set 
$\{(u, v) \colon\ u \neq v\}$ 
of ordered pair of two vertices of $H$, and 
for any three vertices $u, v, w$ of $H$, 
we have $(u, v) \to (w, v)$ and $(v, w) \to (v, u)$ in $H^*$ 
if and only if 
\begin{enumerate}[\bfseries --]
\item $(u, w) \in E(H)$ and $(u, v) \notin E(H)$, or 
\item $(w, u) \in E(H)$ and $(v, u) \notin E(H)$. 
\end{enumerate}
\end{definition}

We can use the implication graphs 
for recognizing adjusted interval digraphs. 
\begin{lemma}\label{lemma:implication-digraph}
Let $H$ and $H^*$ be a reflexive digraph 
and its implication graph, respectively. 
A pair of two vertices $u, v \in V(H)$ is 
an invertible pair if and only if in $H^*$, 
the vertices $(u, v)$ and $(v, u)$ are 
in the same strong component. 
\end{lemma}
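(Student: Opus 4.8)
The plan is to show that the implication graph $H^{*}$ and the pair digraph $H^{+}$, which share the same vertex set $\{(u,v)\colon\ u\neq v\}$, have the same reachability relation, and hence the same strong components. Once this is done the lemma is immediate: by definition $\{u,v\}$ is an invertible pair precisely when $(u,v)$ and $(v,u)$ lie in one strong component of $H^{+}$, which by the claim is equivalent to their lying in one strong component of $H^{*}$. To get the equivalence of reachability it suffices to prove two inclusions: (i) $E(H^{*})\subseteq E(H^{+})$; and (ii) every edge of $H^{+}$ is the first and last vertex of a directed path in $H^{*}$. Together these give that $H^{*}$ and $H^{+}$ have the same transitive closure, hence the same strong components.

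For (i), take an edge of $H^{*}$, say $(u,v)\to(w,v)$, arising from a triple $u,v,w$. Whichever of the two cases of the definition of $H^{*}$ holds, one matches this edge with one of the two edges that the corresponding data generate in $H^{+}$, the missing factor always being a loop supplied by reflexivity of $H$. For instance, in the first case we have $(u,w)\in E(H)$ and $(u,v)\notin E(H)$, and since $(v,v)\in E(H)$ the data $(u,w),(v,v)\in E(H)$, $(u,v)\notin E(H)$ generate in $H^{+}$ exactly the edges $(u,v)\to(w,v)$ and $(v,w)\to(v,u)$. The companion edge $(v,w)\to(v,u)$ and the second case of the definition are verified the same way, so $E(H^{*})\subseteq E(H^{+})$; in particular every walk of $H^{*}$ is a walk of $H^{+}$.

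For (ii), let $(a,b)\to(c,d)$ be an edge of $H^{+}$, so $(a,c),(b,d)\in E(H)$ and $(a,d)\notin E(H)$; since $H$ is reflexive the last condition forces $a\neq d$, so $(a,d)$ is a genuine vertex of $H^{*}$. Then $(a,b)\to(a,d)$ is an edge of $H^{*}$ produced by the triple with $u=d,\ v=a,\ w=b$ via the second case (as $(b,d)\in E(H)$ and $(a,d)\notin E(H)$), and $(a,d)\to(c,d)$ is an edge of $H^{*}$ produced by the triple with $u=a,\ v=d,\ w=c$ via the first case (as $(a,c)\in E(H)$ and $(a,d)\notin E(H)$). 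Hence $(a,b)\to(a,d)\to(c,d)$ is a directed path in $H^{*}$; the companion $H^{+}$-edge $(d,c)\to(b,a)$ is routed identically through $(d,a)$. Combined with (i), this shows the transitive closures of $H^{*}$ and $H^{+}$ coincide, which proves the lemma.

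The routine verifications in (i) and (ii) are short; the one point that needs care — and the only real obstacle — is the routing in (ii): one must choose the correct "mixed" intermediate pair, namely the pair $(a,d)$ certified by the $H^{+}$-edge to be a non-edge of $H$, confirm that it is a legitimate vertex of $H^{*}$ (which is exactly where reflexivity enters), and check that each of the two half-edges falls into one of the two cases of the definition of $H^{*}$. Everything else is bookkeeping.
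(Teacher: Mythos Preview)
Your approach is essentially identical to the paper's: both prove $E(H^*)\subseteq E(H^+)$ by using reflexivity to supply the missing loop, and then route each $H^+$-edge $(a,b)\to(c,d)$ in $H^*$ through the intermediate vertex $(a,d)$. One small omission: by the definition of $H^+$, an edge $(a,b)\to(c,d)$ can arise either from $(a,c),(b,d)\in E(H)$ with $(a,d)\notin E(H)$ \emph{or} from $(c,a),(d,b)\in E(H)$ with $(d,a)\notin E(H)$, and you only spell out the first; the same intermediate $(a,d)$ works in the second case (just via the other alternative in the $H^*$ definition), and the paper does handle both.
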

\begin{proof}
Let $H^+$ be the pair digraph of $H$. 
Let $u, v, w$ be three vertices of $H$ such that 
$(u, v) \to (w, v)$ in $H^*$ 
(or equivalently, $(v, w) \to (v, u)$ in $H^*$). 
By definition, 
$(u, w) \in E(H)$ and $(u, v) \notin E(H)$, or 
$(w, u) \in E(H)$ and $(v, u) \notin E(H)$. 
Since the vertex $v$ has a loop, 
in both cases 
$(u, v) \to (w, v)$ and $(v, w) \to (v, u)$ in $H^+$. 
Therefore, $H^*$ is a subgraph of $H^+$. 
\par
Assume that $(u, v) \to (u', v')$ in $H^+$. 
By definition, 
$(u, u'), (v, v') \in E(H)$ and $(u, v') \notin E(H)$, or 
$(u', u), (v', v) \in E(H)$ and $(v', u) \notin E(H)$. 
In both cases, if $(u, u')$ or $(v, v')$ is a loop, 
then $(u, v) \to (u', v')$ in $H^*$. 
Thus we may assume $u \neq u'$ and $v \neq v'$. 
We have $u \neq v'$ since $H$ is reflexive. 
Recall that $u \neq v$ and $u' \neq v'$. 
Thus, the vertices $u, v, v'$ are distinct, 
and $(u, v) \to (u, v')$ in $H^*$. 
Similarly, the vertices $u, u', v'$ are distinct, 
and $(u, v') \to (u', v')$ in $H^*$. 
Therefore, if $(u, v) \to (u', v')$ in $H^+$, then 
$H^*$ has a directed path from $(u, v)$ to $(u', v')$. 
\end{proof}
Lemma~\ref{lemma:implication-digraph} gives an algorithm 
to find an invertible pair if it exists. 
Given a reflexive digraph $H$, 
the algorithm 
first construct the implication graph $H^*$ of $H$, 
then compute the strong components of $H^*$, and 
finally check for the existence of a pair $(u, v)$ 
and $(v, u)$ within one strong component. 
The implication graph $H^*$ has $n(n-1)$ vertices, and 
at most $2nm$ edges 
since $H^*$ has at most two edges 
for each pair of a vertex and an edge of $H$. 
Therefore, we can construct $H^*$ 
in time $O(nm)$, and 
check for the existence of invertible pairs 
in the same time bound.

We next describe the algorithm for producing 
a min ordering of an adjusted interval digraph. 
Let $H$ and $H^*$ be a reflexive digraph and 
its implication graph, respectively. 
As an auxiliary graph, 
we use a complete graph $K$ with the vertex set $V(H)$. 
An \emph{orientation} of $K$ is 
a digraph obtained from $K$ 
by orienting each edge of $K$, that is, 
replacing each edge $uv \in E(K)$ 
with either $(u, v)$ or $(v, u)$. 
An orientation of $K$ is \emph{acyclic} 
if it contains no directed cycles; 
an acyclic orientation of $K$ 
is equivalent to a linear ordering of the vertices of $H$. 

We say that a vertex $(u, v)$ of $H^*$ 
is an \emph{implicant} of a vertex $(u', v')$ 
if $H^*$ has a directed walk 
from $(u', v')$ to $(u, v)$. 
We say that 
an orientation $T$ of $K$ 
\emph{is consistent with} $H$ if 
for each vertex $(u, v)$ of $H^*$, we have 
$u \to v$ in $T$ implies $u' \to v'$ 
for every implicant $(u', v')$ of $(u, v)$. 
It is clear that 
an acyclic orientation of $K$ is consistent with $H$ 
if and only if it contains no forbidden patterns 
of min orderings. 
Therefore, a min ordering of $H$ is equivalent to 
an orientation of $K$ that is acyclic and 
consistent with $H$. 

It is sufficient for the existence of a min ordering of $H$ 
that there is an orientation of $K$ consistent with $H$. 
\begin{lemma}\label{lemma:propagation}
There is a min ordering of $H$ if and only if 
there is an orientation of $K$ consistent with $H$. 
\end{lemma}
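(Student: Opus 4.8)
The direction ``$\Leftarrow$'' requires nothing new: if $\prec$ is a min ordering of $H$, then the acyclic orientation of $K$ it induces contains no forbidden pattern, so by the observation preceding the statement it is consistent with $H$; in particular $K$ has an orientation consistent with $H$.

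For ``$\Rightarrow$'', let $T$ be an orientation of $K$ consistent with $H$. Since an orientation of $K$ that is both acyclic and consistent with $H$ is precisely a min ordering, it suffices to turn $T$ into such an orientation. The plan is to repair $T$ strong component by strong component. First I would decompose $T$ into its strong components; the condensation of $T$ is acyclic, so we may index the components $C_{1},\dots ,C_{p}$ so that every arc of $T$ between two distinct components runs from a lower to a higher index. These inter-component arcs create no cycle, so it is enough to re-orient, within each $C_{i}$, the arcs of $T$ having both endpoints in $C_{i}$ so that $T|_{C_{i}}$ becomes a transitive tournament while the whole orientation stays consistent with $H$. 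For one component $C$ the subproblem is of the same form as the lemma: $T|_{C}$ is an orientation of the complete graph on $C$ consistent with $H[C]$ --- and $H[C]^{*}$ is exactly the subdigraph of $H^{*}$ induced on the pairs with both coordinates in $C$ --- subject to the extra implications inherited from the already-fixed components. I would treat this by induction, proving the slightly stronger statement that a consistent orientation together with a prescribed ``already-decided'' family of pairs can be completed to an acyclic consistent one.

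The recursion shrinks the instance except when $T$ is strongly connected, and this is the step I expect to be the main obstacle: no local repair is possible, and one must instead produce a genuinely new, acyclic, consistent orientation. Equivalently, one has to show that the $2$-SAT-type system attached to $H^{*}$ --- choose, for every unordered pair $\{u,v\}$, exactly one of $(u,v)$ and $(v,u)$, respecting reachability in $H^{*}$ --- whenever it is solvable at all, has a solution that is a transitive tournament. The natural tool is the anti-automorphism $(u,v)\mapsto (v,u)$ of $H^{*}$, which reverses every arc and hence maps strong components to strong components; combined with the fact that consistency of $T$ forces $(u,v)$ and $(v,u)$ into different strong components of $H^{*}$ for every pair $\{u,v\}$ --- that is, by Lemma~\ref{lemma:implication-digraph}, $H$ has no invertible pair --- one should be able to show that no forbidden pattern is \emph{forced}, i.e.\ that there is no triple $u,v,w$ for which all the $H^{*}$-reachabilities needed to pin down a directed triangle hold at once. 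Upgrading ``no forbidden pattern is forced'' to ``a transitive solution exists'' is the heart of the matter and is where I expect most of the work to lie.
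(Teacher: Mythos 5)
Your easy direction is fine (a min ordering is exactly an acyclic orientation of $K$ consistent with $H$, so it in particular supplies a consistent orientation). But for the hard direction your proposal is not a proof: you reduce to the case of a strongly connected consistent orientation $T$, declare that there ``no local repair is possible,'' and then explicitly leave open the step of upgrading ``no forbidden pattern is forced'' to ``an acyclic consistent orientation exists.'' That unproved step \emph{is} the lemma, so the argument has a genuine gap exactly at its heart. Moreover, the strong-component decomposition of $T$ buys you little: the constraints coming from $H^*$ couple pairs across different components of $T$ (an arc $(v',w')\to(v,w)$ of $H^*$ need not stay inside one component), so the ``inherited implications'' in your induction are not obviously manageable, and in the worst case $T$ is strongly connected from the start and nothing has been reduced. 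Your observation that consistency forces $(u,v)$ and $(v,u)$ into different strong components of $H^*$ (no invertible pairs) is true, but by itself it does not produce a min ordering; indeed the paper proves the implication ``no invertible pairs $\Rightarrow$ min ordering'' \emph{from} this lemma, so appealing to it here would invert the logical order.

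The missing idea, and the route the paper takes, is that a local repair is in fact possible, contrary to your claim. Fix a vertex $u$ and let $E_u$ be the set of arcs $(v,w)$ of $T$ lying on a directed triangle $u\to v$, $v\to w$, $w\to u$; replace $T$ by $T'$ obtained by reversing exactly the arcs of $E_u$. The key technical point (the paper's Lemma~\ref{lemma:triangle}) is that $E_u$ is closed under taking $H^*$-predecessors among arcs of $T$: if $v'\to w'$ in $T$ and $(v',w')\to(v,w)$ in $H^*$ with $(v,w)\in E_u$, then $(v',w')\in E_u$ as well (its proof uses that the triangle $\{u,v,w\}$ is either complete or independent in $H$). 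This closure is precisely what makes the wholesale reversal of $E_u$ preserve consistency; a short case analysis then shows the reversal destroys all triangles through $u$ and creates no new directed triangles anywhere. Iterating over all vertices yields an acyclic consistent orientation, i.e.\ a min ordering, in $O(n^3)$ time. Without this (or some substitute for the final step you left open), your proposal does not establish the statement.
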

Let $T$ be an orientation of $K$ consistent with $H$ 
that is not acyclic. 
In order to prove Lemma~\ref{lemma:propagation}, we provide 
an algorithm for producing 
another orientation of $K$ that is acyclic and 
consistent with $H$. 

A \emph{directed triangle} is a directed cycle of length 3. 
It is well known that 
an orientation of a complete graph is acyclic if and only if 
it contains no directed triangles. 
Let $u$ be a vertex of $K$, and 
let $E_u$ be the set of all the edges $(v, w) \in E(T)$ 
such that $u \to v$, $v \to w$, and $w \to u$ in $T$. 
The \emph{reversal} $E_u^-$ of $E_u$ is 
the set of directed edges obtained from $E_u$ 
by reversing the direction of all the edges in $E_u$, 
that is, $E_u^- = \{(x, y) \colon\ (y, x) \in E_u\}$. 
We define that $T'$ is 
the orientation of $K$ obtained from $T$ 
by reversing the direction of all the edges in $E_u$, 
that is, $E(T') = (E(T) - E_u) \cup E_u^-$. 

We will show that the orientation $T'$ 
has the following properties: 
$T'$ is still consistent with $H$; 
$T'$ contains no directed triangles having the vertex $u$; 
the reversing the direction of edges in $E_u$ generates 
no directed triangles. 
Therefore, by repeated application of 
this procedure for each vertex of $K$, 
we can obtain an orientation of $K$ that is acyclic and 
consistent with $H$; 
the complexity of the algorithm is $O(n^3)$. 

To show that $T'$ is still consistent with $H$, 
we prove a lemma for directed triangles 
in the orientation of $K$ consistent with $H$. 
\begin{lemma}\label{lemma:triangle}
Let $T$ be an orientation of $K$ 
consistent with $H$. 
Suppose that $T$ has three vertices 
$u, v, w$ such that 
$u \to v$, $v \to w$, and $w \to u$ in $T$. 
If $v' \to w'$ in $T$ and 
$(v', w') \to (v, w)$ in $H^*$, then 
$u \to v'$, $v' \to w'$, and $w' \to u$ in $T$. 
\end{lemma}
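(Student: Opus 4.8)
Throughout I would use consistency only in the following form: if $T$ is consistent with $H$, if $a\to b$ in $T$, and if $(a,b)\to(c,d)$ is an edge of $H^*$, then $c\to d$ in $T$, since $(c,d)$ is then an implicant of $(a,b)$. The plan is to show that the directed triangle $u\to v\to w\to u$ propagates backward along $H^*$: from the edge $(v',w')\to(v,w)$ of $H^*$ and the hypothesis $v'\to w'$ in $T$, the orientation $T$ must contain the triangle $u\to v'\to w'\to u$; since $v'\to w'$ is given, only $u\to v'$ and $w'\to u$ remain to be proved. An edge of $H^*$ joins two pairs agreeing in one coordinate, so $w'=w$ or $v'=v$; if both hold then $(v',w')=(v,w)$ and there is nothing to prove, so I would split into the two dual cases.

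Suppose $w'=w$. Then $w'\to u$ holds by the triangle, and it remains to produce $u\to v'$. By the definition of $H^*$ (applied to the triple $v',w,v$), the edge $(v',w)\to(v,w)$ exists because one of the two conditions of the definition holds in $H$. I would observe that reversing every edge of $H$ leaves $H^*$ unchanged — its defining condition simply swaps its two disjuncts — hence leaves the notion of consistency and every hypothesis and conclusion of the lemma unchanged; since this reversal interchanges the two conditions, it suffices to treat one of them, say $(v',v)\in E(H)$ and $(v',w)\notin E(H)$. Granting this: if moreover $(v',u)\notin E(H)$, then the triple $v',u,v$ makes $H^*$ contain $(u,v)\to(u,v')$, so $u\to v$ in $T$ forces $u\to v'$ in $T$, as wanted. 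And $(v',u)\in E(H)$ is impossible: then $(v',u)\in E(H)$ and $(v',w)\notin E(H)$ make $H^*$ contain $(w,u)\to(w,v')$, so $w\to u$ in $T$ forces $w\to v'$ in $T$, contradicting $v'\to w'=w$ in $T$. This settles the case $w'=w$.

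Suppose now $v'=v$; then $u\to v'$ holds by the triangle and one must produce $w'\to u$. As before, up to reversing $H$ I may assume $(w,w')\in E(H)$ and $(w,v)\notin E(H)$. If $(w,u)\notin E(H)$, the triple $w,u,w'$ makes $H^*$ contain $(w,u)\to(w',u)$, so $w\to u$ in $T$ forces $w'\to u$ in $T$. The hard part — and, I expect, the only genuinely delicate point of the whole proof — is ruling out $(w,u)\in E(H)$: here a single propagation no longer suffices and one needs a short chain of forced conclusions, each step using that a directed triangle of a consistent orientation cannot be broken. Assuming $(w,u)\in E(H)$, consistency and the triangle force successively: $(v,u)\in E(H)$ (else $H^*$ contains $(u,v)\to(w,v)$, so $u\to v$ in $T$ gives $w\to v$ in $T$, against $v\to w$); then $(v,w)\in E(H)$ and $(u,v)\notin E(H)$ (else $H^*$ contains $(w,u)\to(w,v)$, so $w\to u$ in $T$ gives $w\to v$ in $T$); then $(u,w)\in E(H)$ (else $H^*$ contains $(u,v)\to(u,w)$, so $u\to v$ in $T$ gives $u\to w$ in $T$, against $w\to u$). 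But $(u,w)\in E(H)$ together with $(u,v)\notin E(H)$ makes $H^*$ contain $(u,v)\to(w,v)$, so $u\to v$ in $T$ forces $w\to v$ in $T$, again against $v\to w$ — a contradiction. Hence $(w,u)\notin E(H)$, completing the case.

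It remains only to record the routine observations guaranteeing that the triples invoked above genuinely consist of three distinct vertices — for instance $v'\neq u$, since $v'=u$ would give both $u\to w'=w$ (from $v'\to w'$) and $w\to u$ in $T$. In total the argument is four nearly parallel case checks, or two after invoking the edge-reversal symmetry of $H^*$, with only the sub-case $v'=v$, $(w,u)\in E(H)$ requiring more than one step of propagation.
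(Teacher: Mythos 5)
Your proof is correct and follows essentially the same route as the paper: the same case split $w'=w$ versus $v'=v$, the same edge-reversal symmetry of $H^*$ to reduce to one disjunct of the definition, and the same propagation of the triangle $u\to v\to w\to u$ through consistency. The only organizational difference is that where the paper first establishes that $\{u,v,w\}$ is complete or independent in $H$ and invokes this in the $v'=v$ case, you rule out $(w,u)\in E(H)$ by an inline chain of forced conclusions, which amounts to a local re-derivation of that same dichotomy.
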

\begin{proof}
We say that 
a set of vertices $S \subseteq V(H)$ is 
\emph{complete} if $(x, y), (y, x) \in E(H)$ 
for any two vertices $x, y \in S$, and 
is \emph{independent} 
if $(x, y), (y, x) \notin E(H)$. 
We claim that the set of vertices 
$\{u, v, w\}$ is complete or independent. 
Suppose $(u, v) \in E(H)$. 
If $(w, v) \notin E(H)$, then 
$(v, w) \to (u, w)$ in $H^*$, a contradiction. 
Thus $(w, v) \in E(H)$. 
If $(w, u) \notin E(H)$, then 
$(w, u) \to (v, u)$ in $H^*$, a contradiction. 
Thus $(w, u) \in E(H)$. 
If $(v, u) \notin E(H)$, then 
$(u, v) \to (w, v)$ in $H^*$, a contradiction. 
Thus $(v, u) \in E(H)$. 
Continuing in this way, we have that 
$\{u, v, w\}$ is complete. 
\par
We have either $v' = v$ or $w' = w$. 
Suppose $v' = v$. 
Since $(v', w') \to (v, w)$ in $H^*$, we have 
$(v, w) \notin E(H)$ and $(w', w) \in E(H)$, or 
$(w, v) \notin E(H)$ and $(w, w') \in E(H)$. 
In both cases, 
the set of vertices $\{u, v, w\}$ is independent. 
Since $(u, w) \notin E(H)$ and $(w', w) \in E(H)$, or 
$(w, u) \notin E(H)$ and $(w, w') \in E(H)$, 
we have $(w, u) \to (w', u)$ in $H^*$; 
therefore, $w' \to u$ in $T$. 
\par
We next suppose $w' = w$. 
Since $(v', w') \to (v, w)$ in $H^*$, we have 
$(v', w') \notin E(H)$ and $(v', v) \in E(H)$, or 
$(w', v') \notin E(H)$ and $(v, v') \in E(H)$. 
Due to symmetry, we may assume 
$(v', w') \notin E(H)$ and $(v', u) \in E(H)$. 
If $(v', u) \in E(H)$, we have 
$(v', w') \to (u, w)$ in $H^*$, a contradiction. 
Thus $(v', u) \notin E(H)$. 
We now have $(u, v) \to (u, v')$ in $H^*$, and 
therefore, $u \to v'$ in $T$. 
\end{proof}

Suppose that $T'$ is not consistent with $H$. 
Then, there exist three vertices $x, y, z$ such that 
$x \to y$ and $y \to z$ in $T'$ but $(x, y) \to (z, y)$ in $H^*$ 
(or equivalently, $(y, z) \to (y, x)$ in $H^*$). 
Since $T$ is consistent with $H$, we have 
$(x, y) \in E_u^-$ or $(y, z) \in E_u^-$. 
Suppose $(x, y), (y, z) \in E_u^-$. 
We have that $(x, y) \in E_u^-$ implies $u \to y$ in $T$ 
and $(y, z) \in E_u^-$ implies $y \to u$ in $T$, a contradiction. 
If $(x, y) \in E_u^-$ and $(y, z) \notin E_u^-$, 
then $(y, x) \in E_u$ and $y \to z$ in $T$. 
Since $(y, z) \to (y, x)$ in $H^*$, 
we have from Lemma~\ref{lemma:triangle} that 
$(y, z) \in E_u$, a contradiction. 
Similarly, if $(x, y) \notin E_u^-$ and $(y, z) \in E_u^-$, 
then $x \to y$ in $T$ and $(z, y) \in E_u$. 
Since $(x, y) \to (z, y)$ in $H^*$, 
we have from Lemma~\ref{lemma:triangle} that 
$(x, y) \in E_u$, a contradiction. 
Therefore, $T'$ is still consistent with $H$. 

Trivially, $T'$ contains no directed triangles having the vertex $u$. 
\par
Let $x, y, z$ be three vertices 
such that $x \to y$, $y \to z$, and $z \to x$ in $T'$. 
Suppose $(x, y), (y, z) \in E_u^-$. 
We have that $(x, y) \in E_u^-$ implies $u \to y$ in $T$ and 
$(y, z) \in E_u^-$ implies $y \to u$ in $T$, a contradiction. 
Thus at most one edge on the directed triangle is in $E_u^-$. 
Suppose $(x, y) \in E_u^-$ and $(y, z), (z, x) \notin E_u^-$. 
We have $u \to y$, $y \to z$, $z \to x$, and $x \to u$ in $T$. 
If $u \to z$ in $T$ then $(z, x) \in E_u$; 
if $z \to u$ in $T$ then $(y, z) \in E_u$, a contradiction. 
Therefore, the reversing the direction of edges in $E_u$ generates 
no directed triangles, 
and we have Lemma~\ref{lemma:propagation}.

We now show an algorithm for finding 
an orientation of $K$ consistent with $H$. 
We use an algorithm for the 2-satisfiability problem. 
An instance of the 2-satisfiability problem is 
a \emph{2CNF formula}, 
a Boolean formula in conjunctive normal form 
with at most two literals per clause. 
We construct the 2CNF formula $\phi_H$ associated with $H$. 
Assume that the vertices of $H$ are linearly ordered, and 
let $x_{(u, v)}$ be a Boolean variable 
if a vertex $u$ of $H$ precedes a vertex $v$ 
in this ordering. 
We denote the negation of $x_{(u, v)}$ by $x_{(v, u)}$. 
We define that $\phi_H$ is a 2CNF formula 
consisting of all the clauses 
$(x_{(u, v)} \vee x_{(v, w)})$ such that 
$(u, w) \in E(H)$ and $(u, v) \notin E(H)$, or 
$(w, u) \in E(H)$ and $(v, u) \notin E(H)$. 
\par
Let $\tau$ be a truth assignment of $\phi_H$. 
We define that an orientation of $K$ associated with $\tau$ 
is an orientation $T_{\tau}$ such that 
$u \to v$ in $T_{\tau}$ if and only if $x_{(u, v)} = 0$ in $\tau$ 
for any two vertices $u, v$ of $K$. 
It is clear from the construction of $\phi_H$ that 
$T_{\tau}$ is consistent with $H$ 
if and only if $\tau$ satisfies $\phi_H$. 
\begin{lemma}\label{lemma:2SAT}
There is an orientation of $K$ consistent with $H$ 
if and only if $\phi_H$ is satisfiable. 
\end{lemma}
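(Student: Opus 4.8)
The plan is to exploit the one-to-one correspondence between orientations of $K$ and truth assignments of $\phi_H$ that is already implicit in the construction of $T_\tau$: assigning a value to the variable $x_{(u,v)}$ (reading $x_{(v,u)}$ as $\neg x_{(u,v)}$) is exactly the same as choosing an orientation of the edge $uv$ of $K$, via the rule that $u \to v$ holds iff $x_{(u,v)} = 0$. Thus it suffices to prove that, under this correspondence, $T_\tau$ is consistent with $H$ precisely when $\tau$ satisfies $\phi_H$, and the lemma is then immediate.

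First I would verify the local equivalence between clauses and edges of $H^*$. Comparing the definition of $\phi_H$ with that of $H^*$ shows that the clause $(x_{(u,v)} \vee x_{(v,w)})$ is present exactly when $(u,v) \to (w,v)$ (equivalently $(v,w) \to (v,u)$) is an edge of $H^*$. Such a clause is satisfied by $\tau$ if and only if it is not the case that both $u \to v$ and $v \to w$ hold in $T_\tau$, that is, if and only if $u \to v$ in $T_\tau$ forces $w \to v$ in $T_\tau$. Hence $\tau$ satisfies $\phi_H$ exactly when $T_\tau$ satisfies the forcing condition along every single edge of $H^*$.

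Next I would bridge the gap between single edges of $H^*$ and the definition of consistency, which quantifies over all implicants, that is, over all directed walks of $H^*$. For one direction, assuming $\tau$ satisfies $\phi_H$, I take any vertex $(u,v)$ of $H^*$ with $u \to v$ in $T_\tau$ and any implicant $(u',v')$, reached by a directed walk $(u,v) = (a_0,b_0) \to (a_1,b_1) \to \cdots \to (a_k,b_k) = (u',v')$; propagating the single-edge implication along the walk by induction on $k$ yields $u' \to v'$ in $T_\tau$, so $T_\tau$ is consistent with $H$. For the other direction, if $T$ is consistent with $H$ then the forcing condition holds in particular along every single edge of $H^*$ (a walk of length one), so the assignment read off from $T$ satisfies every clause of $\phi_H$, hence $\phi_H$ itself.

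The one place that needs care --- the main obstacle, such as it is --- is exactly this bridging step: $\phi_H$ records only the length-one edges of $H^*$, whereas consistency is phrased in terms of arbitrary directed walks; the induction along a walk closes the gap because each step of the walk is an edge of $H^*$ and hence is captured by a clause. The remaining point to watch is the convention $x_{(v,u)} = \neg x_{(u,v)}$, which one must keep in mind so that the rule $u \to v \iff x_{(u,v)} = 0$ is well defined independently of the fixed linear order on $V(H)$; this is bookkeeping rather than a genuine difficulty.
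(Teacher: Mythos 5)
Your proposal is correct and follows essentially the same route as the paper, which treats the lemma as immediate from the correspondence $u \to v$ in $T_\tau$ iff $x_{(u,v)}=0$ and the observation that clauses of $\phi_H$ encode exactly the edges of $H^*$. The only content you add is spelling out the induction along directed walks that upgrades single-edge forcing to consistency with all implicants, a detail the paper leaves as ``clear from the construction.''
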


The 2CNF formula $\phi_H$ has at most $n(n-1)/2$ Boolean variables, and 
at most $nm$ clauses since $\phi_H$ has at most one clause 
for each pair of a vertex and an edge of $H$. 
Thus $\phi_H$ can be constructed in $O(nm)$ time. 
Since a satisfying truth assignment of $\phi_H$ can be computed 
in time linear to the size of $\phi_H$ (see~\cite{APT79-IPL} for example), 
we can find an orientation of $K$ consistent with $H$ 
in $O(nm)$ time. 

Let $\phi$ be a 2CNF formula. 
For a Boolean variable $x_i$ in $\phi$, 
the negation of $x_i$ is denoted by $\overline{x_i}$. 
The \emph{implication graph} $G(\phi)$ of $\phi$ 
is the digraph constructed as follows: 
for each variable $x_i$, 
we add two vertices named 
$x_i$ and $\overline{x_i}$ to $G(\phi)$; 
for each clause $(x_i, x_j)$, 
we add two edges to $G(\phi)$ so that 
$\overline{x_i} \to x_j$ and $\overline{x_j} \to x_i$. 
A 2CNF formula $\phi$ is satisfiable 
if and only if in $G(\phi)$, 
any pair of vertices $x_i$ and $\overline{x_i}$ are not 
in the same strong component~\cite{APT79-IPL}. 

For a reflexive digraph $H$, 
it is clear from the construction of $\phi_H$ and $H^*$ that 
$G(\phi_H)$ is isomorphic to the subgraph of $H^*$ 
obtained by removing all the isolated vertices of $H^*$. 
Therefore, we have the following. 
\begin{lemma}\label{lemma:APT79}
The 2CNF formula $\phi_H$ is satisfiable 
if and only if $H$ has no invertible pairs. 
\end{lemma}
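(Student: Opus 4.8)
The plan is to obtain the statement as a bookkeeping composition of three facts that are already in hand: the characterization of $2$-satisfiability via implication graphs~\cite{APT79-IPL} recalled just above, the isomorphism between $G(\phi_H)$ and the non-isolated part of $H^*$, and Lemma~\ref{lemma:implication-digraph} relating invertible pairs of $H$ to strong components of $H^*$. No new combinatorial idea is needed; the work is entirely in threading the equivalences together carefully.

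First I would apply the criterion of Aspvall, Plass, and Tarjan: $\phi_H$ is satisfiable if and only if $G(\phi_H)$ contains no variable and its negation in a common strong component. It then remains only to rephrase this condition in terms of $H^*$. Using the isomorphism between $G(\phi_H)$ and the digraph obtained from $H^*$ by deleting all its isolated vertices, the literal $x_{(u,v)}$ corresponds to the vertex $(v,u)$ of $H^*$ (the direction of this correspondence can be pinned down by matching the edges produced by the clauses of $\phi_H$ against the edges in the definition of $H^*$, but it is immaterial for the strong-component statement), so a variable together with its negation corresponds to a pair of vertices $(u,v)$ and $(v,u)$ of $H^*$ with $u \neq v$. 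The single point that needs care is the deletion of isolated vertices: since $u \neq v$, the vertices $(u,v)$ and $(v,u)$ are distinct, so whenever they lie in a common strong component of $H^*$ that component has at least two vertices; hence neither vertex is isolated, and that component is preserved by the deletion, while conversely a common strong component of $G(\phi_H)$ lifts back to one of $H^*$. Thus $G(\phi_H)$ has some variable and its negation in one strong component if and only if $H^*$ has some $(u,v)$ and $(v,u)$ in one strong component.

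Finally I would invoke Lemma~\ref{lemma:implication-digraph}, which says exactly that $H^*$ has vertices $(u,v)$ and $(v,u)$ in a common strong component for some $u \neq v$ if and only if $H$ has an invertible pair. Chaining the equivalences yields that $\phi_H$ is satisfiable if and only if $H$ has no invertible pairs. I do not anticipate a genuine obstacle; the only subtlety worth writing out explicitly is the isolated-vertex bookkeeping in the middle step (and, for full rigor, confirming the exact vertex correspondence of the isomorphism), neither of which affects the strong-component reasoning.
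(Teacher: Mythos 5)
Your proposal is correct and follows essentially the same route as the paper: the paper likewise derives the lemma by combining the Aspvall--Plass--Tarjan strong-component criterion for $2$-satisfiability, the observation that $G(\phi_H)$ is isomorphic to $H^*$ with its isolated vertices removed, and Lemma~\ref{lemma:implication-digraph}. Your explicit handling of the literal-to-vertex correspondence ($x_{(u,v)}$ to $(v,u)$) and of the isolated-vertex bookkeeping simply fills in details the paper leaves as ``clear.''
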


From Lemmas~\ref{lemma:propagation},~\ref{lemma:2SAT}, 
and~\ref{lemma:APT79}, we now have an alternative proof 
for the theorem of Feder et al.~\cite{FHHR12-DAM}. 
\begin{theorem}
A reflexive digraph has a min ordering 
if and only if it contains no invertible pairs. 
\end{theorem}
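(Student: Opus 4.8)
The plan is simply to concatenate the chain of equivalences established in the preceding lemmas, since each of them was set up precisely for this purpose. First I would invoke Lemma~\ref{lemma:propagation}: a reflexive digraph $H$ has a min ordering if and only if the complete graph $K$ on $V(H)$ admits an orientation consistent with $H$. Next I would apply Lemma~\ref{lemma:2SAT}: such a consistent orientation exists if and only if the 2CNF formula $\phi_H$ is satisfiable. Finally I would apply Lemma~\ref{lemma:APT79}: $\phi_H$ is satisfiable if and only if $H$ has no invertible pairs. Composing these three biconditionals yields the claim, and no further argument is needed.

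It is worth recording where the two directions of the statement actually come from. The easy direction—if $H$ has an invertible pair then $H$ has no min ordering—was already observed in the introduction directly from the definition of the pair digraph $H^+$, and it is in any case subsumed by the contrapositive of the chain above. The substantive direction—that the absence of invertible pairs forces the existence of a min ordering—is carried almost entirely by Lemma~\ref{lemma:propagation}, whose proof furnishes the edge-reversal procedure converting an arbitrary orientation of $K$ consistent with $H$ into one that is in addition acyclic, together with Lemma~\ref{lemma:APT79}, which rests on the observation that $G(\phi_H)$ coincides, up to isolated vertices, with the implication graph $H^*$, on the classical strong-component characterization of $2$-satisfiability, and on Lemma~\ref{lemma:implication-digraph} identifying invertible pairs with pairs $(u,v)$, $(v,u)$ lying in a common strong component of $H^*$.

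Accordingly, I do not expect any real obstacle at this stage: the only point requiring care is bookkeeping, namely verifying that the "if and only if" in each of the three lemmas refers to the same fixed reflexive digraph $H$ and to the same canonically associated auxiliary objects $K$, $H^*$, and $\phi_H$, so that the three equivalences genuinely compose. Since all three lemmas are phrased uniformly in these terms, the composition is immediate, and the theorem follows at once.
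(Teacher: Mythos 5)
Your proposal is correct and matches the paper exactly: the paper derives this theorem by composing Lemmas~\ref{lemma:propagation}, \ref{lemma:2SAT}, and \ref{lemma:APT79} in precisely the chain you describe (with Lemma~\ref{lemma:implication-digraph} underlying Lemma~\ref{lemma:APT79}, as you note). No discrepancies.
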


We finally summarize our algorithm 
for recognizing adjusted interval graphs. 
This algorithm produces a min ordering of the given graph 
if it is an adjusted interval digraph, 
and finds an invertible pair if otherwise. 
\begin{algo}\label{algorithm}
Let $H$ be a reflexive digraph. \\
\begin{tabular}{ll}
  \emph{Step~1.}&
  Compute a 2CNF formula $\phi_H$ from $H$. 
  \\
  \emph{Step~2.}&
  Find a satisfying truth assignment of $\phi_H$. 
  \\
  & 
  If $\phi_H$ is satisfiable, go to Step~3. Otherwise, go to Step~4. 
  \\
  \emph{Step~3.}&
  Let $\tau$ be a satisfying truth assignment of $\phi_H$. 
  \\
  &
  Compute an orientation $T_{\tau}$ of $K$ associated with $\tau$. 
  \\
  &
  Compute a min ordering of $H$ from $T_{\tau}$ if $T_{\tau}$ is not acyclic. 
  \\
  &
  Output the min ordering of $H$, and halt. 
  \\
  \emph{Step~4.}&
  Construct the implication graph $H^*$ of $H$. Then, find an invertible pair. 
  \\
  &
  Output the invertible pair of $H$, and halt. 
\end{tabular}
\end{algo}
The correctness of the algorithm follows 
from Lemmas~\ref{lemma:propagation},~\ref{lemma:2SAT}, 
and~\ref{lemma:APT79}. 
Steps~1,~2, and~4 can be performed in $O(nm)$ time; 
Step~3 can be performed in $O(n^3)$ time. 
\begin{theorem}
Adjusted interval digraphs can be recognized in $O(n^3)$ time. 
\end{theorem}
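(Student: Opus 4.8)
The plan is to verify the correctness and the running time of Algorithm~\ref{algorithm}. For correctness, I would first note that every adjusted interval digraph is reflexive by definition, so a recognition algorithm may reject any non-reflexive input after an $O(n+m)$ scan; assume henceforth that the input $H$ is reflexive. By the theorem of Feder et al.~\cite{FHHR12-DAM}, such an $H$ is an adjusted interval digraph if and only if it has a min ordering, and Lemmas~\ref{lemma:propagation},~\ref{lemma:2SAT}, and~\ref{lemma:APT79} chain together to show that $H$ has a min ordering if and only if some orientation of $K$ is consistent with $H$, if and only if $\phi_H$ is satisfiable, if and only if $H$ has no invertible pair. Hence, when Step~2 returns a satisfying assignment $\tau$, the orientation $T_\tau$ is consistent with $H$, Step~3 turns it into an acyclic orientation that is still consistent with $H$ --- equivalently, a min ordering --- and we correctly declare $H$ to be an adjusted interval digraph; when $\phi_H$ is unsatisfiable, Step~4 exhibits an invertible pair, which certifies that $H$ has no min ordering, and we correctly reject.

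For the running time I would bound the four steps separately. In Step~1 the formula $\phi_H$ has at most $n(n-1)/2$ variables and at most one clause for each pair consisting of a vertex and an edge of $H$, hence at most $nm$ clauses; scanning all such pairs builds it in $O(nm)$ time. Step~2 applies a linear-time 2-satisfiability procedure (see~\cite{APT79-IPL}) to $\phi_H$, which costs $O(nm)$. Step~4 constructs $H^*$ --- a digraph on $n(n-1)$ vertices with at most $2nm$ edges --- in $O(nm)$ time, computes its strong components in $O(nm)$ time, and inspects the $O(n^2)$ pairs $\{(u,v),(v,u)\}$ to find one lying in a common strong component. Since $m \le n(n-1)$, each of these bounds is $O(n^3)$.

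The bottleneck, and the step I would argue most carefully, is Step~3. Starting from $T_\tau$, which is consistent with $H$ but possibly not acyclic, I would apply the reversal procedure $T \mapsto T'$ once for each vertex $u$ of $K$: compute $E_u = \{(v,w) \in E(T) : u \to v,\ v \to w,\ w \to u\}$ and set $E(T') = (E(T) - E_u) \cup E_u^-$. Invoking Lemma~\ref{lemma:triangle} and the case analysis already used to establish Lemma~\ref{lemma:propagation}, each such step keeps the orientation consistent with $H$, removes every directed triangle through $u$, and creates no new directed triangle --- in particular it cannot recreate a directed triangle through a previously processed vertex. After all $n$ vertices have been processed the orientation contains no directed triangle, so it is acyclic, hence a linear ordering, and being consistent with $H$ it is a min ordering. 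Determining $E_u$ and reversing it requires examining the $O(n^2)$ ordered pairs incident to $u$, so the $n$ rounds together take $O(n^3)$ time. Adding the step bounds, Algorithm~\ref{algorithm} runs in $O(n^3)$ time, which is the claim.
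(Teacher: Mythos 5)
Your proof follows the paper's argument: correctness from Lemmas~\ref{lemma:propagation},~\ref{lemma:2SAT}, and~\ref{lemma:APT79} combined with the Feder et al.\ characterization, and the $O(n^3)$ bound by charging $O(nm)$ to Steps~1,~2, and~4 and $O(n^3)$ to the per-vertex triangle-reversal procedure in Step~3. This matches the paper's reasoning, with the only (harmless) additions being the explicit reflexivity check and the spelled-out accounting of the $n$ reversal rounds.
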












\end{document}